\newcommand{\Arg}[1]{\mathrm{Arg}(#1)}
\newcommand{\C}{\mathbb{C}}
\newcommand{\Z}{\mathbb{Z}}
\newcommand{\R}{\mathbb{R}}
\newtheorem{proposition}{Proposition}
\newtheorem{lemma}{Lemma}
\newcommand{\bra}[1]{{\left\langle{#1}\right\vert}}
\newcommand{\ket}[1]{{\left\vert{#1}\right\rangle}}
\newcommand{\qw}[1][-1]{\ar @{-} [0,#1]}
\newcommand{\qwx}[1][-1]{\ar @{-} [#1,0]}
\newcommand{\gate}[1]{*+<.6em>{#1} \POS ="i","i"+UR;"i"+UL **\dir{-};"i"+DL **\dir{-};"i"+DR **\dir{-};"i"+UR **\dir{-},"i" \qw}
\newcommand{\measure}[1]{*+[F-:<.9em>]{#1} \qw}
\newcommand{\control}{*!<0em,.025em>-=-<.2em>{\bullet}}
\newcommand{\controlo}{*+<.01em>{\xy -<.095em>*\xycircle<.19em>{} \endxy}}
\newcommand{\ctrl}[1]{\control \qwx[#1] \qw}
\newcommand{\ctrlo}[1]{\controlo \qwx[#1] \qw}
\newcommand{\targ}{*+<.02em,.02em>{\xy ="i","i"-<.39em,0em>;"i"+<.39em,0em> **\dir{-}, "i"-<0em,.39em>;"i"+<0em,.39em> **\dir{-},"i"*\xycircle<.4em>{} \endxy} \qw}
\newcommand{\multigate}[2]{*+<1em,.9em>{\hphantom{#2}} \POS [0,0]="i",[0,0].[#1,0]="e",!C *{#2},"e"+UR;"e"+UL **\dir{-};"e"+DL **\dir{-};"e"+DR **\dir{-};"e"+UR **\dir{-},"i" \qw}
\newcommand{\ghost}[1]{*+<1em,.9em>{\hphantom{#1}} \qw}
\newcommand{\Qcircuit}{\xymatrix @*=<0em>}
\begin{document}
\title{A quantum algorithm for Viterbi decoding of classical convolutional codes}
\author{Jon R. Grice}
 \author{David A. Meyer}
 \email{jgrice, dmeyer@math.ucsd.edu}
 \date{\today}
\begin{abstract}
	We present a quantum Viterbi algorithm (QVA) with better than classical performance under certain conditions. In this paper the proposed algorithm is applied to decoding classical convolutional codes, 
for instance; large constraint length $Q$ and short decode frames $N$. Other applications of the classical Viterbi algorithm where $Q$ is large (e.g. speech processing) could experience
significant speedup with the QVA. The QVA exploits the fact that the decoding trellis
	is similar to the butterfly diagram of the fast Fourier transform, with its corresponding fast quantum algorithm. The tensor-product structure of the butterfly diagram corresponds to a
	quantum superposition that we show can be efficiently prepared. The quantum speedup is possible because the performance of the QVA depends on the fanout (number of possible transitions from
	any given state in the hidden Markov model) which is in general much less than $Q$.
The QVA constructs a superposition of states which correspond to all legal paths through the decoding lattice, with phase 
a function of the probability of the path being taken given received data. A specialized amplitude amplification procedure is applied one or more
times to recover a superposition where the most probable path has a high probability of being measured.
\keywords{Viterbi Algorithm \and Hidden Markov Model \and Convolutional Code \and Tensor-Product Structure \and Butterfly Diagram
\and Amplitude Amplification \and Quantum Function Optimum Finding}

\end{abstract}
\maketitle
\section{Introduction}
\label{sect:intro}
	In his 1971 paper on the Viterbi Algorithm (VA), Forney \cite{forney} noted that
	``Many readers will have noticed that the trellis reminds them of the computational flow diagram of
the fast Fourier transform (FFT). In fact, it is identical, except for length, and indeed the FFT is also ordinarily
organized cellwise. While the add-and-compare computations of the VA are unlike those involved in the FFT, some of
the memory-organization tricks developed for the FFT may be expected to be equally useful here.''

The celebrated quantum Fourier transform \cite{deutsch}, \cite{copper} uses tensor-product structure to achieve its
increase in efficiency over its classical counterpart, and so in this paper we propose a quantum Viterbi algorithm (QVA)
taking advantage of the tensor-product structure of the trellis that in certain applications may outperform the (classical)
VA.

Amongst many other applications, the  VA \cite{viterbi} is useful in the decoding of convolutional codes. Thus, in
this paper we will demonstrate the application of a quantum algorithm to decoding classical convolutional codes.
	Quantum algorithms for decoding simplex codes \cite{bargzhou}, more generally Reed-Muller codes \cite{mont} have been proposed and show improvements over the classical algorithms. Protecting quantum information with convolutional encoding has been discussed in \cite{chau,oll}. Applying Grover's algorithm to the decoding of convolutional codes has been demonstrated in \cite{moh},
	which shares some superficial features to the QVA in this paper.
	
For a finite alphabet $Q$,
the VA \cite{forney} is a way to find the most likely
	sequence of states $\pi^* \in Q^N$ that a hidden Markov model (HMM) transitions through given a set of observations $Z$, called emissions. The maximum number of paths branching away from a given state in a single step of the process will be defined to be the fanout $F$.

A single iteration of the QVA will be shown to have gate complexity $O(N|Q|F(\log F)^2)$ and time complexity $O(N \log F)$,
if $|Q|$ quantum systems can be manipulated simultaneously.

Even if $F$ is not considerably smaller than $|Q|$, the gates involving the $|Q|$ complexity factor can be performed in parallel.
However $F$ is often a fixed number within a family of HMMs indexed by size $|Q|$, for example a large graph ($|Q|^2$ very large) where every node is connected to a few ($F$) of its neighbors. $k$-order HMMs can be recast into a first order HMM with large
$|Q|$ and typically low $F$.

The first step of the QVA is to prepare a lattice of paths through the HMM with each path corresponding to a quantum state.
The states can be preloaded with probability amplitudes at the same time in some variants of the QVA. Then the
phases of the states are marked with a function depending on the probability of the path occurring. This can also
be done simultaneously with the lattice building and take advantage of the tensor-product structure of the lattice.
Finally a function-optimization algorithm is called to extract the path probabilities from the
relative phases of the paths and convert them to probability amplitudes. In this paper, the optimization algorithm is
a variant of amplitude amplification. The phase marking step and the amplification step can be repeated some number
of times to make the most probable path more likely to be observed, and we will call those steps together an {\it iteration}
of the algorithm.

The number of iterations in general can be up to $O(\sqrt{L})$ as in Grover's algorithm \cite{grover}, where 
here $L =N F \log F$. This means that in addition to the parallelism in $|Q|$, the QVA can be used with advantage 
when $F$ is smaller than $|Q|$ and when the number of amplification steps required is low (e.g. short decoding frames).

The convolutional codes defined below were chosen as a concrete and familiar example, rather than the most appropriate to the algorithm. On the other hand, their code lattice has a low `fanout' $F$ and hence showcases the QVA's superior performance in $|Q|$.
If using the maximal amount of parallelism, the QVA decodes the convolutional code in time $O(\sqrt{F^N} N \log F)$.
One can see the advantage is most apparent for $F \ll Q$ and for short decode frames $N$. Some general HMMs with absorbing states
will also show similar advantages.
 
In some applications of the QVA multiple trials need to be performed. That is, the problem is set up and the algorithm is iterated
the proper number of times, the output is measured and stored, and the process is repeated to accumulate classical statistical data.
These trials can be performed in parallel. For decoding the convolutional codes in this paper, the number of trials needed depends on the number of errors to be corrected.

We also analyze a variant of the QVA (the probabilistic QVA) in section \ref{sect:trials} with multiple trials and no iterations - the probability amplitudes are loaded into the paths during the lattice building step.  The probability of selecting the wrong answer (not finding the mode) falls exponentially in the number of trials.
There are applications where using the probabilistic QVA is advantageous -- but for decoding it seems that the QVA gives square root better performance than the probabilistic QVA.
	
	The first key idea of the QVA is to represent a particular path through the lattice, say $\pi = \pi_1 \pi_2 \cdots \pi_N$
	as the quantum state $\ket{\pi_1 \pi_2 \cdots \pi_N}$, where each $\ket{\pi_i} \in \C^Q$. This algorithm is a parallel
	algorithm; all of the approximately $\C^F$ admissible paths exist in superposition with equal amplitudes and with
	phases according to their probabilities as the algorithm builds the
	lattice. So the following lattice (here $Q=\{0,1,2\}$)
	\[
		\xymatrix@C=3.5em @R=1.0em{
			{0 \bullet} \ar[rdd]^{p_{0,2}} \ar[r]^{p_{0,0}} & {0 \bullet} \ar[rdd]^{p_{0,2}} \ar[r]^{p_{0,0}} & {0 \bullet} \\
			{1 \bullet}                                     & {1 \bullet}                                     & {1 \bullet} \\
			{2 \bullet}                                     & {2 \bullet} \ar[ru]^{p_{2,1}} \ar[r]^{p_{2,2}}  & {2 \bullet} }
	\]
	is represented by the unnormalized quantum state
	\begin{align}\label{eq:rqs}
		\ket{\psi_2} = 
			e^{i f(p_{0,0})f(p_{0,0})}\ket{000}  & + e^{i f(p_{0,0})f(p_{0,2})}\ket{002} \nonumber \\
				    & + e^{i f(p_{0,2})f(p_{2,1})}\ket{021} + e^{i f(p_{0,2})f(p_{2,2})}\ket{022},
	\end{align}
	where $f$ is some strictly increasing function (for example, the logarithm). We will call the operation that builds the lattice
$H_{\phi}$ and the operation that marks the states $G_{\phi}$. 

We will write $\mathcal{F}^N$ the space
of paths $\pi$ that are admissible given a sequence of emissions, so that $|\mathcal{F}^N| \approx F^N$.

The second key idea is to use a variant of amplitude amplification to shift the phases from the states into their
amplitudes. By analogy to Grover's algorithm and to $G_{\phi}$, we will call this step $G_{\mathcal{F}^N}$.
This procedure is reminiscent of both the Grover-based function minimization algorithms \cite{gromin} and the
multiphase kickback scheme of Meyer and Pommersheim \cite{MP}.
Multiplying \eqref{eq:rqs} by a global phase so that the most probable state (having the largest of the $f(p_{i,j}p_{j,k})$) has phase $-1$ gives lesser probable states a phase closer to $1$ if the monotonic function $f$ is chosen correctly. For
decoding convolutional codes this function corresponds to the number of errors that must have occurred for that path to have been taken.
In many cases, amplitude amplification still works even though the `unmarked' states do not have phase equal to $1$. This
can take up to $\pi/4 \sqrt{L}$ lattice building and amplifications steps, where $L = F^N$ is the number of states to search,
as in Grover's algorithm, but in the cases we will discuss below it may be fewer. 
\section{The Algorithm}
	As mentioned in the introduction, the algorithm first builds a quantum state, each step adding a tensor factor sequentially in
	the forward direction.
 
	\subsection{The building blocks of $H_{\phi}$ and $G_{\phi}$}
We start with an HMM of $|Q|$ states, where the state at time step $1 \le n \le N$ is written $x_n$, but is hidden; instead an emission $y_n \in Z$ is made visible by the HMM at the time the transition from $x_{n}$ to $x_{n+1}$ is made. 
The probability of the transition is written
\[ P(i,j|y) = \Pr(x_{n+1}=j|x_n=i, y_n=y), \]
with emission probabilities
\[ P(y|i,j) = \Pr(y_n=y | x_{n+1}=j, x_n=i). \]
We also let
\[ P_{i,j}(y) = P(i,j|y)P(y|i,j), \]
and clearly we have
\begin{equation}\label{eq:condclass}
   \sum_{j=1}^Q P(i,j|y) = 1 \text{ for all $y$}.
\end{equation}
If the additional condition
\begin{equation}\label{eq:conduni}
	\sum_j P_{i,j}(y) = 1 \text{ for all $i,y$}
\end{equation}
holds, then the transition probabilities could be stored in the state amplitudes instead of the phases, which
would allow an algorithm bypassing the amplification step. HMMs that have this condition \eqref{eq:conduni} however, are quite uncommon.

The unnormalized state $\ket{\psi_{k,y}}=\sum_j e^{i f(p_{k,j}(y))}\ket{j}$ represents a transition from state $k$ to state $j$ given $y$, with $F$ or fewer computational basis states in superposition. The $H_{\phi}G_\phi$ block links these transitions together into paths whose
total phase corresponds to the probability of the path being taken.

For a state $\ket{\psi}$, let $U_\psi$ be some unitary operation such that $U_\psi:\ket{0} \rightarrow \ket{\psi}$. There are many possible operators which satisfy that requirement, with a canonical (but generally nonoptimal) choice being given in section \ref{sect:canon}. 

The controlled operation which is active on the input pattern being $\ket{\psi_{c}}=\ket{k}$, implementing some unitary operation
$U:\C^Q \rightarrow \C^Q$ will be denoted
\begin{equation}\label{eq:link}
	\Qcircuit @C=1em @R=.7em {
\ket{\psi_{c}} & & {/} \qw & \measure{k} \qwx[1] & \qw \\
\ket{\psi_{t}} & & {/} \qw & \gate{U} & \qw \\
}
\end{equation}
That is, the above controlled operation takes $\ket{k}\ket{\psi_t}$ to $\ket{k}U\ket{\psi_t}$, extended by linearity, and acts as the identity
on all $\ket{j}\ket{\psi}$ for $j \ne k$.

For example, if $Q=2$ then the gate 
\[ \Qcircuit @C=1em @R=.7em {
  & \measure{1} \qwx[1] & \qw \\
  & \gate{U_{\ket{1}}} & \qw \\
} \]
is the standard two-qubit controlled not gate \cite{NC}, up to relative phase, since for a $2\times 2$ unitary
$U_{\ket{1}}:\ket{0} \rightarrow \ket{1}$ forces $U_{\ket{1}}:\ket{1} \rightarrow \alpha \ket{0}$.

We will let the gate
\begin{equation} \label{def:v}
	\Qcircuit @C=1em @R=.7em {
	& {/} \qw & \multigate{1}{V_y} & \qw \\
	& {/} \qw & \ghost{V_y} & \qw \\
	}
\end{equation}
stand for the string of controlled operations:
\[ \Qcircuit @C=1em @R=.7em {
& {/} \qw & \measure{0} \qwx[1] & \measure{1} \qwx[1] & \measure{2} \qwx[1] & \qw & {\cdots} &  & \qw & \measure{Q} \qwx[1] & \qw\\
& {/} \qw & \gate{U_{\ket{\psi_{0,y}}}} & \gate{U_{\ket{\psi_{1,y}}}} & \gate{U_{\ket{\psi_{2,y}}}} & \qw & {\cdots} & & \qw & \gate{U_{\ket{\psi_{Q,y}}}} & \qw \\
} \]
One can see that the matrix representation of the unitary operator $V_y:\C^Q \otimes \C^Q \rightarrow \C^Q \otimes \C^Q$ in the computational basis is block diagonal with the $U_{\ket{\psi_{i,y}}}$ for the blocks.
\subsection{The $G_{\phi}$ step}\label{sect:alg}
Given an initial distribution $\ket{\psi_0}$ and sequence of emissions $\{y_i\}_{i=1}^{N}$ perform
\begin{equation}\label{eq:gphi}
\Qcircuit @C=1em @R=.7em {
& \ket{\psi_0} &	&  \qw & \multigate{1}{V_{y_1}} & \qw &  \qw &\qw &\qw &\qw &\qw &\qw &\qw & \\
& \ket{0} & 	&  \qw & \ghost{V_{y_1}} & \multigate{1}{V_{y_2}} & \qw &\qw &\qw &\qw &\qw &\qw &\qw & \\
& \ket{0} &     & \qw  & \qw         & \ghost{V_{y_2}} & \multigate{1}{V_{y_3}} & \qw &\qw &\qw &\qw &\qw     &\qw   & \\
& \ket{0} &     & \qw  & \qw         &  \qw            & \ghost{V_{y_3}} & \qw & \cdots & &     &      &             &  \\
& \vdots  &     &      &             &                 &                 &     &        &  & & &                     &  \\ 
&         &  &     &     &     &     & & \cdots   &  & \multigate{1}{V_{y_{N-1}}}     & \qw                   &\qw   & \\
& \ket{0} &  & \qw &\qw & \qw & \qw & \qw & \qw    & \qw & \ghost{V_{y_{N-1}}}        & \multigate{1}{V_{y_N}} & \qw &  \\
& \ket{0} &  & \qw &\qw & \qw & \qw & \qw & \qw    & \qw & \qw                        &  \ghost{V_{y_N}}    & \qw    &  \\ }
\end{equation}
where the slashes denoting bundles of quantum wires are omitted for clarity. This construction of $G_\phi$ implements $N$ of the $V_y$ blocks,
each of which are composed of $Q$ controlled operations, for a gate complexity of $O(N Q F(\log F)^2)$.
The gate complexity can be improved by considering $G_\phi$ as sequence of $F^N$ commuting Grover marking operators $G_\pi$,
for each $\pi \in \mathcal{F}^N$, restricted to the subspace corresponding to $\mathcal{F}^N$ and marking with a
phase of $e^{i\prod_j f(\pi_j)}$ instead of $-1$. Each Grover operator can be implemented with gate complexity $\log(F^N)$ \cite{diao},
the product of which is preceded by an operation $H_{\mathcal{F}^N}$ which builds an equal superposition of states in $\mathcal{F}^N$.
The Grover inversion restricted to $\mathcal{F}^N$, 
\[G_{\mathcal{F}^N}=I-\frac{1}{\sqrt{F^N}}\sum_{\pi \in \mathcal{F}^N}\ket{\pi}\bra{\pi}, \]
and the operation $H_{\mathcal{F}^N}$ can also be implemented with $\log(F^N)$ gates. A concrete, but less efficient construction
of $G_{\mathcal{F}^N}$ follows the procedure outlined above for $G_{\phi}$.

Each of the $V_y$ are implicitly classically controlled, conditional on the emissions which is the code as received. 
To show that the construction \eqref{eq:gphi} implements $G_{\phi}$ we have:
\begin{proposition}
Fixing a set of emissions $\{y_i\}$,
The state of the system after one execution of \eqref{eq:gphi} is
\begin{equation}\label{eq:prop}
 \ket{\psi_f} = \sum_{\pi \in \mathcal{F}^N} e^{i \prod_j f(\pi_j)} \ket{\pi}, 
\end{equation}
where $\prod_j \pi_j$ is the probability of the path $\pi \in Q^N$ being taken.
\end{proposition}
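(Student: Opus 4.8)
The plan is to prove the identity by induction on the cascade in \eqref{eq:gphi}, tracking the joint state of the registers that have been populated after each $V_{y}$ block is applied. Writing the initial distribution as $\ket{\psi_0}=\sum_k c_k\ket{k}$, I would take as inductive hypothesis that after the blocks $V_{y_1},\dots,V_{y_n}$ have acted, the state of the populated registers is
\[
\sum_{\pi_1\cdots\pi_{n+1}\,\mathrm{adm.}} c_{\pi_1}\,\Bigl(\textstyle\prod_{m=1}^{n} e^{\,i f(p_{\pi_m,\pi_{m+1}}(y_m))}\Bigr)\,\ket{\pi_1\cdots\pi_{n+1}},
\]
the sum running only over admissible prefixes. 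The base case $n=0$ is just $\ket{\psi_0}$ on the first register.

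For the inductive step I would use the block-diagonal form of $V_{y}$ established above: since $V_{y_{n+1}}$ is block diagonal with the $U_{\ket{\psi_{i,y_{n+1}}}}$ as its blocks and acts with register $n+1$ as control and the fresh $\ket{0}$ of the next register as target, it sends each computational-basis term $\ket{\pi_1\cdots\pi_{n+1}}\ket{0}$ to $\ket{\pi_1\cdots\pi_{n+1}}\otimes\ket{\psi_{\pi_{n+1},y_{n+1}}}$, where $\ket{\psi_{k,y}}=\sum_{j} e^{i f(p_{k,j}(y))}\ket{j}$. Expanding the target and extending by linearity over the superposition supplied by the hypothesis appends one tensor factor, replacing each prefix by all of its admissible one-step continuations and multiplying the amplitude of $\pi_1\cdots\pi_{n+2}$ by the single factor $e^{i f(p_{\pi_{n+1},\pi_{n+2}}(y_{n+1}))}$. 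This is exactly the hypothesis with $n$ replaced by $n+1$, so the induction closes and after all $N$ blocks yields a sum over all of $\mathcal{F}^{N}$ as in \eqref{eq:prop}.

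Two observations finish the argument. First, admissibility is automatic: if $p_{k,j}(y)=0$ then $\ket{j}$ does not occur in $\ket{\psi_{k,y}}$, so forbidden transitions never enter the superposition and only $\pi\in\mathcal{F}^N$ survive. Second---and this is the point I expect to be the main obstacle---the accumulated amplitude of a path is the \emph{product} of the per-step exponentials, which combine \emph{additively} in the exponent to $\exp\!\bigl(i\sum_{m} f(p_{\pi_m,\pi_{m+1}}(y_m))\bigr)$, rather than as a product of $f$-values. To match the phase displayed in \eqref{eq:prop} one must read the strictly increasing $f$ as the logarithm, under which $\sum_m f(p_{\pi_m,\pi_{m+1}}(y_m))=f\!\bigl(\prod_m p_{\pi_m,\pi_{m+1}}(y_m)\bigr)=f(\text{path probability})$; reconciling this additive accumulation with the literal product notation of the statement (and of the worked example \eqref{eq:rqs}) is the step that requires care, and I would make the identification $\prod_j\pi_j=\Pr(\pi)$ explicit at that point.
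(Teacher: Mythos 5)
Your proof is correct and follows essentially the same route as the paper's: induction on the number of $V_y$ blocks applied, using the block-diagonal action of $V_{y_{n+1}}$ on the control register to append one tensor factor $\ket{\psi_{\pi_{n+1},y_{n+1}}}$ and thereby extend each admissible prefix by all of its admissible continuations. You are in fact more careful than the paper on the one delicate point: the per-step phases multiply, so the accumulated exponent is the sum $\sum_m f(p_{\pi_m,\pi_{m+1}}(y_m))$, whereas the proposition (and the paper's concluding ``$=\ket{\psi_f}$'') writes the product $\prod_j f(\pi_j)$; these agree only under the reading $f=\log$, so that the sum of $f$-values equals $f$ of the path probability --- an identification the paper leaves implicit but you rightly make explicit.
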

\begin{proof}
Assume the assertion is true for $N-1$ steps, so
\[ \ket{\psi_{N-1}} = \sum_{\pi} e^{i \prod_j f(\pi_j)} \ket{\pi_1 \pi_2 \cdots \pi_{N-1}}, \]
with the sum over all possible paths $\pi$ of length $N-1$,
and then apply the gate
\[ \Qcircuit @C=1em @R=.7em {
 & \ket{x}&  &	 \qw & \multigate{1}{V_{y_N}} & \qw \\
& \ket{0}          &  &  \qw & \ghost{V_{y_N}} & \qw \\
	}
\]
with the last qubit of $\ket{\psi_{N-1}}$ being fed into the wire marked with $\ket{x}$. The output of that
operation will be 
\begin{align*}
	\ket{\psi_N} & = \sum_{\substack{i \text{ s.t. } \\ \pi_{N-1}=i}} e^{i \prod_{j=1}^{N-1} f(\pi_j)} \ket{\pi_1 \pi_2 \cdots \pi_{N-2}} \ket{\psi_{i,y_N}} \\
			     & = \sum_{\substack{i \text{ s.t. }\\ \pi_{N-1}=i}}\quad
							\sum_{\substack{j \text{ s.t. }\\ \pi_N = j}} e^{i \prod_{j=1}^{N-1} f(\pi_j)}\ket{\pi_1 \pi_2 \cdots \pi_{N-2} i}\ket{j}
								e^{i f(P_{i,j}(y_N))}                                                     \\
				 & = \ket{\psi_f}.
\end{align*} 
The base case follows from the definition: If $\pi_1$ is the path from $i$ to $j$ (given $y_1$), then $\pi_1 = P_{i,j}(y_1)$.
\end{proof}

\subsection{Implementation of a canonical unitary $U_\psi$}\label{sect:canon}
The classically controlled operation $H_{\mathcal{F}^N}$ is built up from quantum controlled $U_\psi$ gates where
$\ket{\psi}$ is a superposition of quantum states representing the fanout of a state in the HMM. 
For $q=\log |Q|$ and if $\ket{\psi}=\sum_{i=1}^{2^q}\psi_i \ket{i}$, with $K$ of the $\psi_i$ nonzero, we call $K$ the fanout of the state, and hence the maximum $K$ for all the $\psi$ derived from the HMM transitions is $F$, the fanout of the HMM.
When every state has the same fanout equal to $F$, $\ket{\psi}$ is always an equal superposition. Otherwise, there are cases when $U_\psi$ must take $\ket{0}$ to some $\ket{\psi} \in \R^{K+1}$.

For binary convolutional
codes defined in section \ref{sect:conv}, with $k$ the number of bits in the message block, $F =2^k$, the number of possible states that the encoder can transition to given a new message block entering the encoder. For the codes in that section, $Q$ can be made
arbitrarily large (while $F$ remains fixed) by increasing the constraint length.
 
Now we construct the sub-block $U_\psi$ from basic gates, in particular
\[ R_y(\theta) = \begin{pmatrix}\cos \frac{\theta}{2} & -\sin \frac{\theta}{2} \\ \sin \frac{\theta}{2} & \cos \frac{\theta}{2} \end{pmatrix}, \]
and then use generalized Toffoli gates to construct the controlled operations in $V_y$. The following method is similar to that in \cite{NC}.

If $R_{y,a,b}$ is the two-level unitary acting nontrivially on the subspace spanned by $\ket{a},\ket{b}$ then
constructing a unitary matrix $R$ with specified first column $\ket{\phi}$ will give $R:\ket{0} \rightarrow \ket{\phi}$.
	
		For some $2\times 2$ unitary matrix $W$, we write $W_{\{\ket{a},\ket{b}\}}$ for the two-level unitary operator which
		acts as $W$ on the subspace spanned by $\{\ket{a},\ket{b}\}$ and as the identity elsewhere. That is,
		\begin{align*}
		  W_{\{\ket{a},\ket{b}\}} \ket{a} & = (W)_{1,1}\ket{a} + (W)_{1,2} \ket{b}, \\
		  W_{\{\ket{a},\ket{b}\}} \ket{b} & = (W)_{2,1}\ket{a} + (W)_{2,2} \ket{b} \text{ and } \\
		  W_{\{\ket{a},\ket{b}\}} \ket{\psi} & = \ket{\psi},
		\end{align*}
		for $\ket{\psi}$ not in the span of $\{\ket{a},\ket{b}\}$.
		Let
		\[ R_y(\theta) = 
			\begin{pmatrix}\cos \frac{\theta}{2} & -\sin \frac{\theta}{2} \\ 
							\sin \frac{\theta}{2} & \cos \frac{\theta}{2} 
			\end{pmatrix} \] 
		be the y-rotation operator, or the exponentiated $Y$ gate. If the context is clear we will also write
		\begin{equation}R_y(t)=\begin{pmatrix}t & -\sqrt{1-t^2} \\  \label{def:ry}
							\sqrt{1-t^2} & t 
			\end{pmatrix}.
		\end{equation}
		We claim that the first column of
		\[ V := R_y(\theta_1)_{\{\ket{0},\ket{1}\}} R_y(\theta_2)_{\{\ket{0},\ket{2}\}} \cdots R_y(\theta_{N})_{\{\ket{0},\ket{N}\}} \]
		in the computational basis is the vector
		\begin{equation}\label{eq:firstcol}
			\begin{pmatrix} \cos(\theta_1)\cos(\theta_2)\cdots \cos(\theta_{K}) \\
						   \cos(\theta_2)\cos(\theta_3)\cdots \cos(\theta_{K})\sin(\theta_1) \\
						   \cos(\theta_3)\cos(\theta_4)\cdots \cos(\theta_{K})\sin(\theta_2) \\
						   \vdots \\
						   \cos(\theta_{K-1})\cos(\theta_{K-2})\sin(\theta_{K-1}) \\
						   \cos(\theta_{K})\sin(\theta_{K}) 
			\end{pmatrix},
		\end{equation}	
		which is equivalent to spherical coordinates in $\R^{K+1}$. Thus we have constructed a gate
		\[ V: \ket{0} \rightarrow \ket{\psi} \]
		for any $\ket{\psi} \in \R^{K+1}$, which is sufficient for our purposes.
		To see that this is true, assume that a product of $k$ $R_y$ matrices has the block form 
		\[ \prod_{j=1}^k R_y(\theta_j)_{\{\ket{0},\ket{j}\}} = 
					\left(\begin{array}{c|c|c}
						   &           & 0  \\
						 \vec{a} & \mbox{\huge $A$ } & \vdots  \\
						   &           & 0 \\
						\hline   
                        0  & 0 \ 0 \cdots 0 \ 0 & 1						
			\end{array}\right) \oplus I_{K-1-(k+1)}.\]
		Since $R_y(\theta_{k+1})_{\{\ket{0},\ket{k+1}\}}$ has the matrix form
		\[ \begin{pmatrix} \cos(\theta_{k+1}) & 0 & \cdots & 0 & -\sin(\theta_{k+1}) \\
							0                 & 1 &  0     & \cdots & 0  \\
							0 & 0 & 1 & 0 & 0\\
							\sin(\theta_{k+1}) & 0 & 0 & \cdots & \cos(\theta_{k+1})
							\end{pmatrix} \oplus I_{K-1-(k+1)}, \]
		we find the product
		\[ \prod_{j=1}^k R_y(\theta_j)_{\{\ket{0},\ket{j}\}} = 
					\left(\begin{array}{c|c|c}
						   &           &   \\
						 \cos(\theta_{j+1})\vec{a} & \mbox{\huge $A'$ } & A''  \\
						   &           &   \\
						\hline   
                        \sin(\theta_{j+1})  & 0 \ 0 \cdots 0 \ 0 & A'''						
			\end{array}\right) \oplus I_{K-1-(k+1)}.\]
	Thus using the base case: $\vec{a}=\left(\cos \theta_1,\sin \theta_1 \right)$ and the recursion for $\vec{a}$ we get \eqref{eq:firstcol}.
	Therefore it requires $K$ ($F$) rotation operators $R_y$ to implement $V$ in addition to the logic to change the subspaces that the $R_y$ act on. Using the Gray coding technique for the universal construction of quantum gates \cite{NC}, the subspace changing logic requires $f^2 2^f$, (where $F=2^f$) operations per $V$, or $F(\log F)^2)$. One can obtain even better results in the combinatorial control
logic by exploiting the special structure of $V$, but we will not need them here.

In the first amplification step of a trial we can combine $H_{\mathcal{F}^N}$ with $G_\phi$ by interlacing phase rotation operators $e^{i Z t_j}$ with the two-level unitaries $R_{y,a,b}$ above. To implement the classically controlled $G_\phi$ separate from
$H_{\mathcal{F}^N}$ as needed for subsequent amplification steps in the trial, a similar procedure to the above is implemented, using the
same Toffoli combinatorial logic.

\section{Decoding a rate $1/2$ convolutional code}\label{sect:conv}
%
	An $(n,k)$-convolutional code \cite{blahut} is a trellis code, which divides a datastream into \emph{message} blocks of length $k$ 
	and encodes them into \emph{code} blocks of length $n$. 
	We will limit our discussion to binary convolutional codes, in which the datastream is a string of bits.
	An encoder can be defined by its generator matrix $\mathbf{G}(x)$, a $k$ by $n$ matrix of polynomials in $\Z_2[x]$. If the
	encoder has a memory of the previous $m$ blocks, the generator polynomials have degree $m$ at most.
	The following diagram represents an encoder for a $(2,1)$-convolutional code with $m=2$.
	\begin{equation}\label{enc1}
		\xymatrix@C=1.5em @R=1.0em{
			 & & & {\oplus}\ar[r]& {O_0} \\
			d\ar[r]    & {\bullet} \ar@/^1.5pc/[urr] \ar[r] \ar@/_1pc/[dr]& {\boxed{1}} \ar[d] \ar[r] & \boxed{2} \ar[u] \ar[d]\\
			 &  & {\oplus} \ar[r] & {\oplus} \ar[r] & O_1 }
	\end{equation}
	The datastream is split into single bits and enters the encoder at the symbol $d$. At each discrete timestep the bit at $d$ is
	shifted right to the memory cell labeled $\boxed{1}$, and the previous contents of $\boxed{1}$ are shifted into $\boxed{2}$. The
	contents of $\boxed{2}$ are then discarded. The outputs $O_1$ and $O_2$ are formed by the various sums (in the field $\Z_2$) of
	memory contents and $d$. The generator matrix for encoder \eqref{enc1} is
	\[ \mathbf{G}(x) = [1+x^2 \quad 1+x+x^2], \]
	where the power of $x$ represents the time delay on the bit. Presumably the code words pass through a noisy channel and are
	received as \emph{receive} blocks by the decoder. Since the sequence of states (shift register contents) in the encoder can be
	modeled by an HMM, the classical VA can be used to implement the decoder module by tracing the
	most probable sequence of states for the encoder given the sequence of receive blocks.

	The state diagram
	for the code generated by \eqref{enc1} is
	\[
	\xymatrix@C=3.5em @R=3.0em{
		*++[o][F-]{00} \ar@(l,u)[]^{0/00} \ar[r]^{1/11} &  *++[o][F-]{10} \ar[d]^<<{1/10} \ar@/^/[ld]^{0/01}\\
		*++[o][F-]{01} \ar[u]^<<{0/11}  \ar@/^/[ru]^{1/00}           & *++[o][F-]{11} \ar@(r,d)[]^{1/01} \ar[l]^{0/10}}
	\]
	where the states are the boxed memory contents of the shift registers in the encoder. The transition to the
	next state is an arrow labeled by $i/o_0o_1$, where $i$ is the message bit and $o_0o_1$ are the code bits.
	
	To implement the algorithm, we consider a lattice segment representing the reception of the codeword $00$:
	\begin{equation}\label{clrec}
		\xymatrix@C=4.5em @R=1.5em{
			{p_{00}} \ar@{=>}[rdd] \ar@{-->}[r] &00 \\
			{p_{01}} \ar@{=>}[ur] \ar@{-->}[dr] &01  \\
			{p_{10}} \ar[ur] \ar[dr]            &10  \\
			{p_{11}} \ar[uur] \ar[r]            &11 }
	\end{equation}
	with arrows representing
	\[\xymatrix@C=2em @R=0.0em{\text{$0$ errors} \ar@{-->}[r] &\text{, with phase rotation $p^0$,}\\ 
							  \text{$1$ error} \ar[r] & \text{, with phase rotation $p^1$,} \\ 
							  \text{$2$ errors} \ar@{=>}[r] & \text{, with phase rotation $p^2$, }\\} \]
	occurring in the channel. 
	The phase rotation $p=e^{i \omega}$ will be chosen so that there is a high probability of observing the most probable path
	after a fixed number of amplification steps. The idea is to make the paths with the fewest errors have a relative phase
	which is close to $-1$ while the less likely paths are made to have relative phase close to $1$.
	Under such circumstances, one would expect the algorithm to behave like a slightly noisy Grover search algorithm
	with the marked state being the most probable path. 

	For example, take $N=4$, so we have $Q=(\C^2)^4$ but we only need to search $\mathcal{F}^N=\C^4$. Supposing that no errors
	have occurred, the diagonal of $G_\phi$ looks like a permutation of
	\[ g_{\phi} = (p^0, p^2, p^3, p^3, p^3, p^4, p^4, p^4, p^4, p^4, p^5, p^5, p^5, p^5, p^6, p^7), \]
	where the permutation would be the identity in this case if the most probable path is $\ket{0000}$.
	Setting $\omega=0.68$ and amplifying three times ($\approx \pi/4 \sqrt{2^4}$) gives the state
	\[ (-0.76 + 0.29i)\ket{0000} + (0.16 - 0.05i)\ket{0001} + \cdots + (0.37 - 0.04 i)\ket{1111}, \]
	giving answer $\ket{0000}$ upon measurement with probability $\Pr_0=0.673$.
	
	The optimal phase $\omega^*$ varies with $N$ as the following table shows:
	
	\begin{tabular}{l|cccccccc}
	$N$        & 3    & 4    & 5    & 6    & 7    & 8    & 9  & 10 \\
	\hline
	iterations & 2    & 3    & 5    & 7    & 9    & 13   & 19 & 25\\
	\hline
	$\omega^*$ & 0.84 & 0.68 & 0.61 & 0.51 & 0.44 & 0.39 & 0.35&0.31\\
	\hline
	$\Pr_0$    & 0.73 & 0.67 & 0.73 & 0.76 & 0.76 & 0.79 & 0.82&0.73\\
	\end{tabular}
	
	The amplitude of the most probable state after approximately $\pi/4 \sqrt{F^N}$ iterations is a weighted sum
	of exponentials in independent variable $\omega$ with high frequency components increasing with $N$. One can gain
	intuition into choosing a good value for $\omega$ by 
    looking at the phase difference between the most probable state and the second most; it appears like damped oscillation.
	To get the most phase difference we want the maximum negative excursion. This must occur at the first local minimum.
	The oscillation frequency increases as the higher order terms are added so $\omega^*$ must decrease with $N$.
	
	Now suppose that one error has occurred. The diagonal of $G_\phi$ will look like a permutation of
	\[ g_{\phi_1} = (p^1, p^2, p^3, p^3, p^3, p^3, p^4, p^4, p^4, p^4, p^4, p^5, p^5, p^5, p^6, p^8) \] or
	\[ g_{\phi_2} = (p^1, p^3, p^3, p^3, p^3, p^4, p^4, p^4, p^4, p^4, p^4, p^5, p^5, p^5, p^6, p^6) \] or
	one of four other similar vectors, depending on where the error occurred.
	For these vectors, the maximal chance of success of observing $\ket{0000}$ tends to occur at an $\omega$ closer to $1$ than
	for the zero error case $g_{\phi}$. We will say that $g_{\phi}$ and $\{g_{\phi_1},g_{\phi_2}\}$ belong to
	different error classes.
Similarly, for two errors, the task of distinguishing vectors like
	\[ g_{\phi_3} = (p^2, p^2, p^2, p^2, p^3, p^3, p^3, p^4, p^4, p^5, p^5, p^5, p^5, p^6, p^6, p^7) \]
	is performed optimally for different $\omega$ and number of iterations. There are several ways to proceed
	in order to optimize the success of the algorithm for a given maximum number of errors corrected. One is
	the `detune' $\omega$; driving $\omega$ to a value closer to $0$ makes the probabilities of the possible paths
	closer to one another. The path probabilities are separated in different proportions as one changes the 
	number of iterations, which is possible another way to find a compromise between the error classes.
	
	We feel the best procedure is to adopt a sort of adaptive algorithm. For the first few trials (the number of
    trials needed is discussed below) $\omega$ is close
	to optimal for finding the paths corresponding to the most probable number of errors. If after those trials the
	expected mode is not achieved, $\omega$ is changed to correspond to the next most probable set of errors, and so on.
	Note that this procedure will introduce time complexity factor according to the number of probability classes. 
	Hence for these convolutional codes, this factor depends the maximum number of errors to be corrected.
	
	We can convert \eqref{clrec} to a unitary operation $V_{00}$ \eqref{def:v} using the representation as was
	obtained in \eqref{eq:rqs}. The operation $V_{00}$ then must map the following:
	
	\begin{align*}
	\ket{00}\ket{00} & \longmapsto \frac{1}{\sqrt{2}}\ket{00}(e^{i 0} \ket{00} + e^{2i\omega} \ket{01}) \\
	\ket{01}\ket{00} & \longmapsto \frac{1}{\sqrt{2}}\ket{01}(e^{2i\omega} \ket{00} + e^{i 0} \ket{01}) \\
	\ket{10}\ket{00} & \longmapsto \frac{1}{\sqrt{2}}\ket{10}(e^{i\omega} \ket{01} + e^{i\omega}\ket{11}) \\
	\ket{11}\ket{00} & \longmapsto \frac{1}{\sqrt{2}}\ket{11}(e^{i\omega} \ket{01} + e^{i\omega}\ket{11})
	\end{align*}
	
	which can be implemented by the following quantum circuit:
\begin{equation} \label{eq:qcirc} \Qcircuit @C=1em @R=.7em {  
	& \ctrlo{3} & \ctrl{2} &\qw      & \ctrlo{3}          & \qw      & \ctrl{2}           & \ctrl{3} & \qw\\
	& \qw       & \qw      &\ctrl{2} & \qw                & \ctrl{2} & \qw                & \qw & \qw \\
	& \qw       & \gate{H} & \qw     & \qw                & \qw      & \gate{e^{i\omega}} & \qw & \qw\\
	& \gate{H}  & \qw      &\targ    & \gate{R_z(\omega)} & \targ    & \qw                & \targ & \qw}  
\end{equation}
using the definition \eqref{def:ry}.

Notice that the two Hadamard gates implement a block of $H_{\mathcal{F}^N}$ (and hence are only called in the first iteration) while the remaining gates implement a block of $G_\phi$. The gate $R_z(\omega) = \begin{pmatrix}1 & 0\\0 & e^{i\omega}\end{pmatrix}$ and
the gate $e^{i\omega} = \begin{pmatrix}e^{i\omega}&0\\0&e^{i\omega}\end{pmatrix}$. Essentially the same operations are implemented in $V_{01}$, $V_{10}$ and $V_{11}$.
\subsection{Trial reduction via amplitude amplification}\label{sect:trials}
When condition \eqref{eq:conduni} holds, or by suitable choice of $f$, the results of section \ref{sect:canon}
can be used to put the probabilities of the paths into the amplitudes of the corresponding state. 
Thus with a single iteration of the algorithm the most probable path is most likely to be observed. 
One may ask if this probabilistic version of the QVA might be preferred over having to repeatedly apply the amplitude amplitication
iteration. The lemma below will tell us that for these convolutional codes, $O(F^N)$ trials of one iteration are needed to
have a good chance of finding the most probable path, in contrast to $O(\sqrt{F^N})$ iterations with $O(1)$ trials
for the QVA.

The trials for the probalistic QVA can be performed in parallel on an ensemble of $r$ quantum systems each containing $Q$ qubits. 
After performing the measurement step, we get $S$ samples of size $Q^N$.
Finding the most probable event is an instance of the multinomial selection problem \cite{ramey}. We use the following lemma from \cite{geroch}:
\begin{lemma}
Consider a collection of numbers with sum one. Denote the largest $b$ and the next largest by $b'$. Carry out $r$ runs in the
corresponding probability distribution and denote by $\kappa(r)$ the probability that the most frequent outcome is not the most
probable (i.e. the $b$-outcome). Then the following limit exists:
\[ \lim_{r\rightarrow \infty} \frac{-\log \kappa(r)}{r}= \lambda = \frac{(b-b')^2}{2[b(1-(b-b'))^2+b'(1+(b-b')^2)]}.\]
Hence $\kappa(r) \rightarrow 0$ as $e^{-\lambda r}$.
\end{lemma}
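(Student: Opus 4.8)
\emph{The plan} is to treat this as a large-deviation estimate for the multinomial counts. Label the outcomes so that outcome $1$ carries the largest probability $b$ and outcome $2$ the next largest $b'$, let $N_i$ be the number of the $r$ runs landing on outcome $i$, and write $\kappa(r)=\Pr(E_r)$ for the failure event $E_r=\{\,\arg\max_i N_i\neq 1\,\}$. Since every competing outcome has probability at most $b'$, the cheapest way to fail is for the runner-up to be observed at least as often as the leader; all other failures either decay at a strictly larger exponential rate or (in the degenerate case of a tie $p_3=b'$) at the same rate, affecting only the polynomial prefactor. First I would therefore reduce the statement to the two-outcome comparison by the sandwich
\[ \Pr(N_2\ge N_1)\ \le\ \kappa(r)\ \le\ \sum_{j\neq 1}\Pr(N_j\ge N_1), \]
which shows that $-\tfrac1r\log\kappa(r)$ and $-\tfrac1r\log\Pr(N_2\ge N_1)$ share the same limit.

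Next I would analyze $D_r:=N_1-N_2=\sum_{t=1}^{r}X_t$, where the increments $X_t$ are i.i.d.\ taking the values $+1,-1,0$ with probabilities $b,\,b',\,1-b-b'$ according to whether run $t$ hits the leader, the runner-up, or neither. Here $E[X_t]=b-b'>0$ and $\mathrm{Var}(X_t)=(b+b')-(b-b')^2=:\sigma^2$, so $\Pr(D_r\le 0)$ is a large-deviation event for a sum with positive drift. The rigorous rate follows from the Chernoff--Cram\'er bound $\Pr(D_r\le 0)\le\min_{\theta\le 0}\big(be^{\theta}+b'e^{-\theta}+1-b-b'\big)^r$, whose minimizer is $e^{\theta^*}=\sqrt{b'/b}$ and whose value is $\big(2\sqrt{bb'}+1-b-b'\big)^r$. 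The closed form quoted in the lemma is the normal-tail version of this same rate: substituting the central-limit approximation $D_r\approx N\!\big(r(b-b'),\,r\sigma^2\big)$ together with the Gaussian tail $\Pr(Z\le -x)\sim e^{-x^2/2}$ gives $\kappa(r)\sim e^{-\lambda r}$ with $\lambda=(b-b')^2/(2\sigma^2)$, and the bracketed quantity in the statement plays the role of $\sigma^2$, agreeing with it to leading order in the gap $b-b'$.

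Finally I would upgrade the upper bound to a two-sided estimate, and this is where I expect the work to lie. \emph{The main obstacle} is the matching lower bound: the Chernoff inequality and the bare central limit theorem only control deviations of size $O(\sqrt r)$ and yield the exponent merely as an upper bound, whereas establishing that $\kappa(r)$ decays \emph{no faster} than $e^{-\lambda r}$ requires a change of measure (exponential tilting to the law with $e^{\theta^*}=\sqrt{b'/b}$, under which $N_1$ and $N_2$ have equal means) combined with a local limit theorem to recover the correct subexponential prefactor. The remaining bookkeeping is to handle the union over the other outcomes---verifying that the runner-up term genuinely dominates the sum and that ties inflate only the prefactor---after which the asserted limit $\lim_{r\to\infty}-\tfrac1r\log\kappa(r)=\lambda$, and hence $\kappa(r)\to 0$ like $e^{-\lambda r}$, follows.
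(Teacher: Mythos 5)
The paper contains no proof of this lemma at all: it is imported verbatim from the cited reference (Geroch), so there is no internal argument to compare yours against. Judged on its own terms, your large-deviations program --- reduce to the two-outcome comparison via the union-bound sandwich, then apply Chernoff--Cram\'er to the increments $X_t\in\{+1,-1,0\}$, with an exponential tilt plus local limit theorem for the matching lower bound --- is the standard rigorous route, and the reduction step is fine up to the tie-breaking caveat, which only affects prefactors.

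The genuine gap is that this route, carried out correctly, establishes a \emph{different} constant from the one in the statement. Cram\'er's theorem gives exactly
\[ \lim_{r\to\infty}-\tfrac1r\log\Pr(N_2\ge N_1)\;=\;-\log\bigl(2\sqrt{bb'}+1-b-b'\bigr)\;=\;-\log\bigl(1-(\sqrt{b}-\sqrt{b'})^2\bigr), \]
which is precisely the Chernoff value you compute. The $\lambda$ in the statement is not equal to this; it agrees only to leading order in the gap $b-b'$ (for $b=0.9$, $b'=0.1$ the true rate is about $0.51$ while the stated formula gives roughly $0.89$). You flag the tension yourself by calling the stated formula ``the normal-tail version of this same rate,'' but your closing paragraph then asserts that the asserted limit follows from the tilting argument --- it does not; a rigorous execution of your plan would in fact refute the stated equality rather than prove it. Note also that the bracketed quantity is not even $\mathrm{Var}(X_t)=(b+b')-(b-b')^2$ (it omits the $(1-b-b')(b-b')^2$ contribution), and the statement appears to contain a typo, $(1+(b-b')^2)$ versus $(1+(b-b'))^2$. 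To make your writeup a proof of something true, you should either prove the exact rate $-\log\bigl(1-(\sqrt{b}-\sqrt{b'})^2\bigr)$, or state explicitly that the quoted $\lambda$ is only a small-gap approximation to it and bound the error.
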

When no errors have occurred in the channel, $b=E_0^N$ and $b'=E_1 E_0^{N-1}$, for $E_0$ an appropriate value chosen for
no error occurring in a particular code word and $E_1$ a value chosen for one error. Then
\[\lambda = \frac{1}{2}\frac{E_0^N}{6-E_0^N}.\]
it is decided, for example, that a decode error probability of $e^{-2} \approx 0.13$ is acceptable, then we have
\[ \frac{0.8^N}{6-0.8^N} = \frac{4}{r}, \]
or
\[ r \sim 24 (1.25)^{N} - 4, \]
if $E_0=0.8$.
A similar procedure is used when collecting the results from the trials of the QVA.
Each $r$ trials are performed in parallel and then measured giving an array of length $r$ of strings of length $N \log Q$ . 
The array is sorted in $rN\log(Q)\log(rN\log Q)$ time and then the mode is
extracted by scanning the array\cite{skiena}. 

The probabilistic QVA seems to be most naturally applied to problems with a large $Q$ and small $N$. 
When there is a path with probability close to the most probable path, lots of extra trials are needed to
separate those out. 

The general analysis of the QVA is difficult, but we can see if a single iteration QVA behaves similarly to a single trial of the probabilistic QVA.

Let the diagonal of $G_\phi = (g_0,g_1,\cdots,g_L)$, with the $g_i$s roughly corresponding to the path probabilities and if
we consider the amplitude of the most probable path after a single application of $H_\phi G_\phi$ and $G_{\mathcal{F}^N}$,
we get 
\[ \mathrm{Pr}_0 = \frac{|g_0 (L-2) - 2 \sum_{i=1}^{L-1} g_i|^2}{L^3}, \]
since the first row of $G_{\mathcal{F}^N}$ is $(-\frac{l-2}{L},\frac{2}{L},\cdots,\frac{2}{L})$.
$\mathrm{Pr}_0$ is maximal, of course, when 
\begin{equation}\Arg{\sum_{i=1}^{L} g_i}=\pi+\Arg{g_0},\label{eq:arg}\end{equation}
which corresponds to $G_\phi = (-1,1,\cdots,1)$, the standard Grover diagonal.
For $L=2^N$ and with $g_i = 1$ for $i>0$,
the probability of measuring a state other than the most probable is then
\[ 1-\mathrm{Pr}_0 = \frac{(2^N-4)^2}{2^{2N}}, \]
which means that asymptotically a single iteration trial is only $O(1)$ times better than a random guess,
which is correct with probability $1/2^N$.
Thus it is best to either use multiple iterations or to put the probabilities directly into the amplitudes as when implementing
the probabilistic QVA. 

\section{Discussion}
A possibility for an improved algorithm is to preload the amplitudes of the states with a function depending on their probabilities using
a $H_\phi$ modified  so that the system is already rotated near the desired state with amplitude of $\ket{0}$
greater than the others and then apply some number of iterations of $G_\phi G_{\mathcal{F}^N}$.

Further research is needed to determine if the optimization step of the algorithm is optimal. Other interesting algorithms are possible
using phase kickbacks other than $-1$.

\bibliographystyle{plain}
\bibliography{qvit}	
\nocite{*}
\end{document}